\newcommand\fs@betterruled{%
  \def\@fs@cfont{\bfseries}\let\@fs@capt\floatc@ruled
  \def\@fs@pre{\vspace*{5pt}\hrule height.8pt depth0pt \kern2pt}%
  \def\@fs@post{\kern2pt\hrule\relax}%
  \def\@fs@mid{\kern2pt\hrule\kern2pt}%
  \let\@fs@iftopcapt\iftrue}
\let\NAT@parse\undefined
\newtheorem{definition}{Definition}
\newtheorem{theorem}{Theorem}
\newtheorem{problem}{Problem}
\newcommand{\pathdist}{\divideontimes}
\newcommand{\TSP}{\text{TSP}}
\title{\LARGE \bf
Attrition-Aware Adaptation for Multi-Agent Patrolling
}
\author{Anthony Goeckner, Xinliang Li, Ermin Wei, and Qi Zhu
    \thanks{We acknowledge support from National Science Foundation grants 1834701, 1724341, 2038853, ECCS-2030251, and CMMI-2024774, and Office of Naval Research grant N00014-19-1-2496.}
    \thanks{All authors are with the Department of Electrical and Computer Engineering,
        Northwestern University, 2145 Sheridan Road, Evanston, IL 60208, USA. Emails:
        {\tt\small anthony.goeckner@northwestern.edu},
        {\tt\small xinliangli2023@u.northwestern.edu},
        {\tt\small ermin.wei@northwestern.edu},
        {\tt\small qzhu@northwestern.edu}.}%
}
\begin{document}

\maketitle

\begin{abstract}
    Multi-agent patrolling is a key problem in a variety of domains such as intrusion detection, area surveillance, and policing which involves repeated visits by a group of agents to specified points in an environment. While the problem is well-studied, most works do not provide performance guarantees and either do not consider agent attrition or impose significant communication requirements to enable adaptation. In this work, we present the Adaptive Heuristic-based Patrolling Algorithm, which is capable of adaptation to agent loss using minimal communication by taking advantage of Voronoi partitioning, and which meets guaranteed performance bounds. Additionally, we provide new centralized and distributed mathematical programming formulations of the patrolling problem, analyze the properties of Voronoi partitioning, and finally, show the value of our adaptive heuristic algorithm by comparison with various benchmark algorithms using physical robots and simulation based on the Robot Operating System (ROS)~2.
\end{abstract}

\section{Introduction}
\label{sec:introduction}

The multi-agent patrolling or multi-robot patrolling problem is well studied, and for good reason. The problem appears frequently in cases such as area surveillance and monitoring, police beats, intrusion detection, and even has similarities to problems such as assignment of janitorial rounds. However, agent attrition is commonly seen in practice for many of these scenarios \cite{tarantaWarfightingNeedsRobot2023} and threatens to prevent completion of the patrolling task. This attrition can take many forms, such as vehicle breakdown, robot destruction, or even a human agent calling out on sick leave.

It is important that such an attrition event can be handled in an intuitive and efficient way. For example, consider the case of human-operated vehicle patrols with the potential for vehicle breakdown. When a vehicle breakdown occurs, the other human agents must divide that agent's tasks amongst themselves and continue patrolling. Rather than reallocating all tasks to all remaining agents (the mathematically optimal solution), one natural solution is to take human behavior and limitations into account by changing the allocations of only those agents that directly neighbor the disabled agent. This serves to reduce possible human confusion and misinterpretations by limiting the amount of change required.

Such patrol algorithms should be subject to theoretical bounds, especially a bound on the loss of performance after agent attrition. Further, while many works in the patrolling literature do address agent attrition, they often do so at high communication cost, requiring a large amount of coordination among agents \cite{farinelliDistributedOnlineDynamic2017}. In this work, we devise a method for agent patrolling that is capable of operating and adapting with almost no communication, and which provides performance guarantees.

More specifically, we consider a multi-agent patrolling problem in which a team of agents must continually visit a set of observation points with a goal of minimizing the average time span that any observation point remains unvisited. To address this problem, agents must efficiently allocate observation points amongst themselves and determine an appropriate visitation order.

We first formulate this problem as an integer program and solve it repeatedly over time. The event of agent attrition may occur between problem solves. We then dynamically adjust our solution for the next solve. We note that the agent attrition event may be either stochastic, modeling unpredicted failures, or deterministic, reflecting scheduled breaks or down time with advance notice.

This problem is NP-hard, and therefore we propose a Voronoi partition-based heuristic using two-stage decomposition, where the second stage can be solved in a distributed fashion. Based around this heuristic, we devise a distributed adaptive algorithm with theoretical performance bounds.
This is the first work that we know of in the multi-agent patrolling literature to dynamically respond to agent attrition disturbances using minimal communication while limiting the amount of change required for the remaining agents and guaranteeing performance bounds.
In this paper, \textbf{we provide the following contributions}:
\begin{itemize}
    \item A centralized mathematical programming formulation (\cref{sec:formulation_centralized}) and a distributed formulation (\cref{sec:form_distrib}) for the multi-agent patrolling problem.

    \item A distributed heuristic algorithm for the multi-agent patrolling problem that can adapt to attrition of patrol agents using minimal communication while only requiring changes to a limited number of agents. (\cref{sec:heuristicApproach})

    \item A theoretical analysis of difficulties encountered when using multiplicatively-weighted Voronoi partitioning based on heterogeneous vehicle speeds. (\cref{sec:mw-voronoi})

    \item Theoretical performance bounds for our heuristic algorithm. (\cref{sec:bound-perf,,sec:bound-attrition})

    \item An experimental comparison of our heuristic algorithm with existing approaches. (\cref{sec:methods})
\end{itemize}

\section{Related Work}
\label{sec:related}
Multi-agent or multi-robot patrolling has a long history of study in the literature. Curious readers may find the latest general surveys on multi-robot patrolling \cite{basilicoRecentTrendsRobotic2022}\cite{huangSurveyMultirobotRegular2019} to be of interest. However, we focus this section on papers especially related to the present work.

A lengthy field report by Taranta et al. from the DARPA OFFSET program highlights the importance of creating a multi-agent patrolling method that minimizes communication in tactical environments and is robust to agent attrition \cite{tarantaWarfightingNeedsRobot2023}.

In \cite{schererMinMaxVertexCycle2022}, a minimum idleness connectivity-constrained multi-robot patrolling problem is introduced.
Scherer et al. partition the graph into regions but do not consider the problem of agent attrition. The primary focus of the algorithm is to maintain communication amongst all agents. A similar-looking integer program formulation of the patrolling problem is presented, though it has key differences from ours which we will describe in \cref{sec:formulation_centralized}.

Recent work by Bapat et al. \cite{bapatDistributedTaskAllocation2022} proposes two algorithms for multi-agent task allocation with highly restricted communications, one using a travelling salesman problem (TSP) heuristic. However, the algorithms are limited to one-time vehicle routing and not well-suited to the patrolling problem.

Voronoi partitioning has long been used in multi-agent task allocation, as seen in \cite{cortes2004coverage}\cite{fuLocalVoronoiDecomposition2009}\cite{dasgupta2012multi}\cite{nowzariSelftriggeredCoordinationRobotic2012}. These papers use Voronoi partitioning to divide tasks and are robust to agent failure. We expand on these with further analysis of the Voronoi partitioning with heterogeneous agent speeds and by application of Voronoi partitioning to the patrolling problem.

In a paper on the multi-depot vehicle routing problem (MDVRP), Bompadre et al. present an offline heuristic algorithm for task allocation which is the same as the initial steps of our algorithm: partition the space using the Voronoi method, and then determine the best tour through each partition using a traveling salesman problem heuristic \cite{bompadreProbabilisticAnalysisUnitDemand2007}. Alongside \cite{hwangDistributedMultiDepotRouting2014}, they also provide certain performance bounds of the algorithm which we expand on in this paper.

In work by Kim et al. \cite{kimVoronoiDiagrambasedWorkspace2020}, a task allocation system based on a Voronoi diagram for a multi-robot spray system in an orchard is proposed. This is expanded on in \cite{kimMultiplicativelyWeightedVoronoiBased2020} by use of a multiplicatively weighted (MW) Voronoi-based task allocation scheme for agricultural robots with heterogeneous speed. We expand on this paper with a discussion of some of the issues of using MW Voronoi partitioning based on vehicle speed, which were not addressed by Kim et al.

Much of the current state-of-the-art patrolling research is based on algorithms, simulators, and environments originally created by \cite{portugalDistributedMultirobotPatrol2013}\cite{portugalCooperativeMultirobotPatrol2016}. Recently, this simulator is extended by \cite{wardEmpiricalMethodBenchmarking2023} to benchmark existing state-of-the-art (SoTA) algorithms against adversarial ``attackers'', though attrition of patrol agents or other disturbances are not considered. Further experiments at the University of Bristol \cite{madinCollectiveAnomalyPerception2023} were performed using the same simulator to assess the SoTA algorithms' robustness to noisy anomaly detection at observation points.

Multi-agent reinforcement learning (MARL)-based approaches such as \cite{guoBalancingEfficiencyUnpredictability2023} are gaining popularity but to our knowledge do not as yet provide the same theoretical guarantees for performance or agent attrition as our method.

Recent work includes a method \cite{katoleBalancingPrioritiesPatrolling2023} for balancing agents' priorities amongst important and unimportant observation points which generates routes for agents while taking resource constraints into account. However, this method is not adaptive to disturbances such as agent attrition and unreliable communication. Another recent article by Kobayashi et al. presents a distributed patrol algorithm that ensures situation awareness of operators at a base station throughout the patrol \cite{kobayashiMultiRobotPatrolAlgorithm2023}, though it has the same drawbacks as \cite{katoleBalancingPrioritiesPatrolling2023}.

Indeed, many recent publications either do not address disturbances such as agent attrition or do not provide performance guarantees. Even after reviewing other articles published in the past year such as \cite{debonaAdaptiveExpectedReactive2023} and \cite{huangMultirobotCooperativePatrolling2023}, we find that no existing methods allow for adaptation to agent attrition, have minimal communication requirements, and provide theoretical performance bounds.

\section{Problem Formulation}
\label{sec:formulation}
The goal of the multi-agent patrolling problem is to repeatedly visit a set of observation points such that some metric is minimized, often based on \textit{idleness}, the amount of time each node spends between visits. Many existing works attempt to minimize the worst node idleness, the average idleness, or the standard deviation of the idleness. In our case, we attempt to minimize the \textit{average idleness} of the nodes. We now introduce the problem setting and assumptions, and then provide a brief problem statement.

\subsection{Problem Setting \& Description of Symbols}
The scalar $m$ denotes the number of agents and $n$ denotes the number of nodes. A patrol graph $G = (V, E)$ is a connected graph composed of a set of nodes $V = \{0, 1, \dots n\}$ and a set of edges $E = \{(i, j),\ i \in V\ j \in V \}$. From now on, we refer to ``observation points'' as ``nodes''. Agents are members of the set $A = \{0, 1, \dots m\}$. The binary matrix $y \in \mathbb{Z}_{m \times n}$ describes visits of agents to nodes: $y^a_{i}=1$ if and only if agent $a$ visits node $i$. The binary matrix $x \in \mathbb{Z}_{m\times n^2}$ describes whether an agent travels along an edge: $x^a_{ij}=1$ if and only if agent $a$ traverses the edge $(i, j)$. The matrix $u \in \mathbb{R}_{m \times n}$ describes the time at which an agent visits a node. The matrix $c \in \mathbb{R}_{m \times n^2}$ describes travel time between all pairs of nodes for each agent.

The vector $o \in \mathbb{Z}_n$ describes the origin node for each agent: $o^a \in \{0, 1, \dots n\}$ is the node at which agent $a$ resides at the allocation time.

We create a $m$-dimensional vector $d$ with elements $d^a \in D$. These elements are artificial \textit{destination} nodes.
The vector $d \in D_k$ denotes the destination node for each agent. One destination node $d^a$ is co-located with each origin node $o^a$ such that the travel time between them is zero: $c^a_{o^a d^a} = c^a_{d^a o^a} = 0, \forall a \in A$.

For convenience, we define the augmented set of vertices (original and artificial) as $\mathring{V} = V \cup D$.

\subsection{Assumptions}
\label{sec:assumptions}

Our assumptions are consistent with realistic patrolling scenarios. We itemize them below.
\begin{itemize}
    \item We assume that every agent $a$ begins at some node $o_a \in V$. This is a realistic assumption, especially for large patrol areas. While we do not cover selection of origin nodes in this paper, the selection does have a significant impact on algorithm performance and may be addressed by future work.

    \item Agents may become disabled at any time and cease to move. We refer to this as ``attrition''. Agents may communicate that their vehicle has been disabled. Therefore, agent attrition is immediately known to all remaining agents. We only consider attrition in terms of mobility, not communication ability.

    \item All origins are distinct. Given some origin $o^a \in o$, $$\quad o^a \neq o^{a'}, \quad\forall a' \neq a \in A.$$

    \item An agent may only travel along the edges $E$ of the graph. If $\langle i, j \rangle \notin E \enspace\exists i,j \in V$, then an agent may not directly travel from node $i$ to node $j$. An edge may be traversed by multiple agents simultaneously.

    \item Agents may be heterogeneous, each with a different speed. Agent speeds are known in advance to all agents. 
    
    \item The cost $c^a_{ij}$ is defined as travel time of agent $a$ between nodes $i$ and $j$: $$c^a_{ij} = \frac{(i \pathdist j)}  {\mathrm{speed}(a)}, \quad\forall a \in A, \quad\forall i,j \in V,$$
    where $(\cdot \pathdist \cdot)$ represents the length of the pairwise shortest path between the two nodes.
    Since the graph $G$ and agent speed are known to all agents a priori, the cost function is also known a priori. The matrix $c$ describes the all-pairs shortest path.

\end{itemize}

Having discussed the setting and assumptions above, we now present a problem statement.

\begin{problem}
    Given a patrolling graph $G = (V,E)$, set of agents $A$, and related setting and assumptions as described above, determine assignment of agents to nodes using matrix $y$ and visitation order of assigned nodes using matrix $x$ in order to minimize the average time between visits (idleness time) of nodes in the patrol graph.
\end{problem}

\section{Mathematical Programming Approach}
\label{sec:formulation_centralized}
We initially based our formulation on the deterministic single-agent Profitable Tour Problem (PTP) as described in \cite{archettiChapter10Vehicle2014}\cite{bruniRiskaverseProfitableTour2019}, which has an objective of minimizing the travel cost minus profits collected at each node. However, we realized that the problem could be simplified by assuming that each agent repeatedly visits the same nodes in a closed cycle and by attempting to minimize the average length of these cycles.
Our formulation is most accurately categorized as a multi-depot vehicle routing problem (MDVRP), since we find a cycle beginning at a pre-selected origin point for each agent.
This makes our formulation more similar to that discussed in \cite{schererMinMaxVertexCycle2022}, which seeks to minimize the maximum amount of time that any node is left idle (between visits). However, our objective is to minimize the \textit{average} time that all nodes are left idle. We formulate this as a mixed-integer nonlinear program for readability, although there may be methods available to linearize the problem.

Based on the problem formulation above, we develop our multi-agent patrolling approach as follows. Note that we use a ``big-M'' formulation for constraint linearization, with some sufficiently large constant $M$ that is significantly greater than any other variables.

\begin{subequations}\label{eq:newform}
    \begin{align}
        \min_{x,y,u} & z = \frac{1}{n} \sum_{a=0}^m (u^a_{d_a} \sum_{i=0}^n y^a_i) & \tag{\ref{eq:newform}} \\
        \mbox{s.t.}
        & \sum_{\substack{i=0 \\ i\neq j}}^n x^a_{ij} = y^a_{j} \quad& \forall a \in A, \forall j \in \mathring{V} \label{eq:newform_constr_arc_prev} \\
        & \sum_{\substack{j=0 \\ j\neq i}}^n x^a_{ij} = y^a_{i} \quad& \forall a \in A, \forall i \in \mathring{V} \label{eq:newform_constr_arc_next}  \\
        & u^a_{i} + c^a_{ij} - M(1 - x^a_{ij}) \leq u^a_{j} & \forall a \in A, \forall i \in V, \label{eq:newform_constr_no_subtour}\\
            &&\forall j \in \mathring{V} \setminus \{i\} \nonumber \\
        & u^a_{o_a} = 0 \quad& \forall a \in A \label{eq:newform_constr_origin_time} \\ 
        & x^a_{d^a o^a} = 1 \quad& \forall a \in A \label{eq:newform_constr_cycle} \\
        & \sum_{i \in D \setminus \{d^a\}} y^a_i = 0 \quad& \forall a \in A \label{eq:newform_constr_wrongdest} \\
        & \sum_{a=0}^m y^a_{i} \geq 1 \quad& \forall i \in V \label{eq:newform_constr_visit_all} \\
        & x^a_{ij} \in \{0, 1\} \quad& \forall a \in A, \forall i,j \in \mathring{V} \nonumber \\
        & y^a_{i} \in \{0, 1\} \quad& \forall a \in A, \forall i \in \mathring{V} \nonumber \\
        & u^a_{i} \geq 0 \quad& \forall a \in A, \forall i \in \mathring{V} \nonumber
    \end{align}
\end{subequations}

This formulation results in each agent being assigned an ordered series of nodes to visit, beginning and ending at the same location (since $o^a$ and $d^a$ are co-located $\forall a$). This creates a closed patrol cycle for each agent.

\paragraph{Constraints}
Constraint \eqref{eq:newform_constr_arc_prev} enforces that if agent $a$ visits node $j$, $a$ must have traversed some edge connected to $j$ (a ``come-from'' constraint). Constraint \eqref{eq:newform_constr_arc_next} enforces that agent $a$ may only traverse a single edge immediately succeeding node $i$ (a ``go-to'' constraint).
Constraint \eqref{eq:newform_constr_no_subtour} enforces that there are no subtours and records the visit time of agent $a$ to node $j$ as the visit time at node $i$ plus the travel cost (time) between nodes $i$ and $j$. This is similar to the ``subtour elimination constraint'' from the Miller-Tucker-Zemlin formulation of the TSP \cite{millerIntegerProgrammingFormulation1960}.
Constraint \eqref{eq:newform_constr_origin_time} enforces that the origin node is visited at time $0$.
Constraint \eqref{eq:newform_constr_cycle} enforces that an agent $a$ must travel from its artificial destination node $d^a$ to its co-located origin node $o^a$, creating a cycle.
Constraint \eqref{eq:newform_constr_wrongdest} enforces that an agent not visit the artificial destination nodes of other agents. Constraint \eqref{eq:newform_constr_visit_all} enforces that all nodes must be visited.

\paragraph{Objective Function}
The objective function \eqref{eq:newform} describes the \textit{average idleness time} of all nodes. For each agent $a \in A$, we multiply the total time $u^a_{d^a}$ taken to complete a cycle by the number of nodes $\sum_{i \in V} y^a_i$ visited in that cycle. We define \textit{idleness} as the time span between agent visits. Each of these nodes will then have the same idleness, since only one agent is assigned to each cycle. Summing the results for each agent $a$ provides the total idleness time of all nodes, which we divide by $n$ to find the average idleness time. Node idleness time is a commonly used metric in the literature for comparison of patrolling algorithms, as seen in e.g. \cite{portugalDistributedMultirobotPatrol2013}\cite{farinelliDistributedOnlineDynamic2017}\cite{bruniRiskaverseProfitableTour2019}. By using this objective function, we can more easily compare our solution with existing approaches.

Note an important difference here between our formulation and that of \cite{schererMinMaxVertexCycle2022}'s min-max vertex cycle cover problem. While Scherer et al. attempt to minimize the worst idleness time of any node, we attempt to minimize the average idleness time. Further, by use of the assignment decision matrix $y$, our problem may be more easily decomposed to a distributed problem (see \cref{sec:form_distrib}).

\subsection{Distributed Approach}
\label{sec:form_distrib}
In multi-agent systems where attrition is expected to occur or in systems that must be robust to disturbances, such as those studied in \cite{tarantaWarfightingNeedsRobot2023}, a single point of failure is often unacceptable. Therefore, we also devise a distributed approach to the problem which we will later use as the basis for our adaptive heuristic algorithm in \cref{sec:heuristicApproach}.

To create a two-stage distributed version of our problem, we break it into master and sub-problems. The master problem should allocate nodes to agents, and the sub-problem should determine the visitation order of the allocated nodes. Once the master problem generates an assignment, the sub-problems can be solved using local information in a distributed and parallel fashion.
We observe that $y$ (node allocation) is a confounding variable, which when fixed the problem naturally decomposes into $m$ sub-problems, one for each agent. As described by \cite{morVehicleRoutingProblems2022}, each of these sub-problems is a traveling salesman problem (TSP).

We define a sub-problem for each agent $a$ as follows, where $\hat{y}^a$ is the fixed $y^a$ chosen by the master problem:
\begin{subequations}
    \label{eq:sub}
    \begin{align}
        \min_{x^a,u^a} & z^a = \frac{u^a_{d_a}}{n} \sum_{i=0}^n \hat{y}^a_i & \tag{\ref{eq:sub}} \\
        \mbox{s.t.}
        & \sum_{\substack{i=0 \\ i\neq j}}^n x^a_{ij} = \hat{y}^a_{j} \quad& \forall j \in \mathring{V} \label{eq:sub_constr_arc_prev} \\
        & \sum_{\substack{j=0 \\ j\neq i}}^n x^a_{ij} = \hat{y}^a_{i} \quad& \forall i \in \mathring{V}\label{eq:sub_constr_arc_next}  \\
        & u^a_{i} + c^a_{ij} - M(1 - x^a_{ij}) \leq u^a_{j} \quad& \forall i \in V, \label{eq:sub_constr_no_subtour}\\
            &&\forall j \in \mathring{V} \setminus \{i\} \nonumber \\
        & u^a_{o_a} = 0 \quad& \label{eq:sub_constr_origin_time} \\ 
        & x^a_{d^a o^a} = 1 \quad& \label{eq:sub_constr_cycle} \\
        & x^a_{ij} \in \{0, 1\} \quad& \forall i,j \in \mathring{V} \nonumber \\
        & u^a_{i} \geq 0 \quad& \forall i \in \mathring{V} \nonumber
    \end{align}
\end{subequations}

Given a predefined set of nodes to visit $\{i | \hat{y}^a_i = 1\}$ by the master problem, this sub-problem finds the optimal tour for agent $a$ to visit all nodes in the set. This is clearly a TSP.

\subsection{Agent Attrition}
When an agent $a$ suffers attrition during execution, nodes assigned to that agent must be reassigned to other agents. In terms of the original MIP described in \cref{sec:formulation_centralized}, this is equivalent to adding constraints $y^a_i=0 \quad\forall i \in \mathring{V}$ and modifying \eqref{eq:newform_constr_cycle} to have a right-hand side of $0$.

Performing this constraint modification is impractical for real-time adaptation, requiring a re-solve of the MDVRP, which is NP-Hard. Therefore, we introduce a heuristic approach to the problem to allow for real-time adaptation.

\section{Heuristic Approach}
\label{sec:heuristicApproach}
The nonlinear optimization problem described in \cref{sec:formulation} is NP-Hard and thus is not suitable for computation at runtime or with large numbers of agents or nodes.

As seen in \cref{sec:form_distrib}, the problem may be divided into two phases: node assignment phase and visitation ordering phase. While these phases are somewhat interdependent, we attempt to create a heuristic for each of the two phases.
Then we describe an adaptive algorithm to solve the problem of patrolling in the face of agent attrition.

\paragraph{Node Assignment Heuristic}
As a heuristic for the node assignment phase, we use Voronoi partitioning of nodes based on the agent's origin point and using travel time as the distance measure. For each node, we calculate its shortest wait time to each of the origins and assign it to the best one. Formally, for each node $i$, we first define an optimal agent selection function $a^*(i)$ by solving the problem of 
\begin{equation} a^*(i) \in \arg\min_{a\in A} c^a_{io^a} = (i \pathdist o^a) / \mathrm{speed}(a)\label{eq:optimal_agent}\end{equation}
If there are multiple minimizers for this problem, we randomly pick one to assign to $a^*(i)$. Then we assign the decision variable 
\begin{align}
 \hat y^a_i = 
    \quad
    \begin{cases}
    1 \quad \mbox{if $a=a^*(i)$},\\ 
    0 \quad \mbox{otherwise.} 
    \end{cases}
    \quad\label{eq:voronoi_y}
\end{align}
We select this simple geometric heuristic for good reason: it allows for the loss of an agent and subsequent reallocation of that agent's nodes without reallocation of the entire graph. See \cref{sec:attrition} for more information and proof.

\paragraph{Visitation Order Heuristic}
The visitation ordering phase is a TSP \cite{morVehicleRoutingProblems2022}. We use a simple nearest-neighbor heuristic to find an approximate solution. This is a greedy algorithm which, starting at the origin, selects the nearest node that is yet to be visited, then performs the same operation at the selected node, and so on until all nodes have been visited.

\subsection{The Adaptive Heuristic Algorithm}
\label{sec:algorithm}
We create a runtime heuristic algorithm that enables patrolling while dynamically responding to agent attrition (due to vehicle breakdown, etc.) by reallocating select agents to cover for the lost agent. The attrition and adaptation process is covered in detail below in \cref{sec:attrition}.

Before patrolling can begin, all nodes in the graph must be allocated to agents using Voronoi partitioning based on the agent's starting position.

\begin{algorithm}
    \caption{The Adaptive Heuristic-based Patrolling Alg.}
    \begin{algorithmic}[1]
        \State $y \gets$ voronoiPartition($G$) \label{lst:line:initialStart}
        \State $x^a, u^a \gets$ \TSP($y^a$)
        \State Begin patrolling.  \label{lst:line:initialStop}
        \While{patrolling}
        \If{another agent $\tilde a$ suffers attrition}
            \State $y' \gets$ voronoiPartition($G$) 
            \If{$y'^a \neq y^a$}  \Comment{Did our alloc. change?}
                \State $x^a, u^a \gets$ \TSP($y'^a$) \Comment{Yes, do TSP again.}
            \EndIf
            \State $y \gets y'$
        \EndIf
        \EndWhile
    \end{algorithmic}
    \label{alg:ahpa}
\end{algorithm}

We refer to this algorithm as the Adaptive Heuristic-based Patrolling Algorithm (AHPA), and it is labelled as ``AHPA'' in any figures or graphs.

AHPA runs concurrently on all agents. The only inter-agent communication required by our algorithm is a notification of agent attrition. While we currently model this as an explicitly communicated message, it could just as easily be an observation or other non-networked indication. The potential use of observations is aided by the fact that only neighboring agents must change their allocation, as will be shown in \cref{thm:neighborAllocations}.

\subsubsection{Agent Attrition \& Adaptation}
\label{sec:attrition}
When agent attrition occurs due to vehicle breakdown, the agent's observation points must instead be visited by other agents. As seen in \cref{alg:ahpa}, we perform Voronoi partitioning of the entire graph $G$, resulting in a new assignment matrix $y'$. We feed this new assignment matrix into our visitation order heuristic to calculate routes for the remaining agents.

One of our design goals is to minimize disruption to remaining agents upon agent loss. Therefore, we use Voronoi partitioning for the assignment heuristic and consider the case where all vehicle speeds are the same. In case of agent loss, only the assignments of the agent's immediate Voronoi neighbors (see \cref{def:neighbor}) will change.

\begin{definition}[Neighboring Agent]\label{def:neighbor}
    The agent $b \in A$ is considered a neighbor of agent $a \in A$ if for all nodes $i$ with $a^*(i)=a$ and $j$ with $a^*(j)=b$ and for all $t \in \mbox{path}(i,j)$, $a^*(t) = a$ or $a^*(t)=b$, where $\mbox{path}(i,j)$ refers to any shortest path between $i$ and $j$.
\end{definition}



\begin{theorem}\label{thm:neighborAllocations}
    The loss of a single agent $\tilde a$ will only change the allocations of its neighboring agents for the heuristics described in \eqref{eq:voronoi_y} when all agents move at the same speed.
\end{theorem}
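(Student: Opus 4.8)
The plan is to split the claim into two independent parts: first, that the only nodes whose owner changes are exactly those previously held by $\tilde a$; and second, that every such orphaned node is absorbed by an agent that is a neighbor of $\tilde a$ in the sense of \cref{def:neighbor}. Throughout I would exploit that, when all speeds are equal, $c^a_{ij} = (i \pathdist j)/s$ for a common speed $s$, so that $a^*(i) = \arg\min_{a} (i \pathdist o^a)$ and the relevant ``distance'' is a single symmetric metric shared by all agents; this common-metric structure is precisely the feature that the heterogeneous-speed case in \cref{sec:mw-voronoi} lacks, and it is what makes the argument below go through.

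For the first part I would use a stability-of-$\arg\min$ observation: deleting a candidate that is not the minimizer from an $\arg\min$ leaves the minimizer unchanged. Concretely, fix a node $i$ with $a^*(i)=a\neq\tilde a$. Since $a$ minimizes $(i \pathdist o^{a'})$ over all $a' \in A$ and $a \neq \tilde a$, it still minimizes over the reduced set $A \setminus \{\tilde a\}$, so $i$ keeps agent $a$. Hence no node outside $\tilde a$'s cell moves, and the only agents that can register a change are those that acquire one of $\tilde a$'s former nodes.

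The geometric engine for the second part is a star-convexity property of the cells: for any node $i$ with $a^*(i)=a$, every shortest path from $i$ to its own origin $o^a$ lies inside $a$'s cell. This follows from the triangle inequality, since if $t$ lies on a shortest $i$-to-$o^a$ path then $(i \pathdist o^a)=(i \pathdist t)+(t \pathdist o^a)$, and combining with $(i \pathdist o^a)\le(i \pathdist o^{a'})$ yields $(t \pathdist o^a)\le(t \pathdist o^{a'})$ for every $a'$, so $t$ may also be assigned to $a$. I would then take an orphaned node $i$, reassigned to its next-closest origin $o^b$, follow a shortest path from $i$ to $o^b$, and use star-convexity in the reduced partition to certify that this path stays within $b$'s enlarged cell; tracing the same path in the original partition should then exhibit that $R_b$ and $R_{\tilde a}$ meet along a common boundary, which is what \cref{def:neighbor} is intended to capture.

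I expect the main obstacle to be this second part, specifically reconciling the local ``reassign to the next-closest origin'' picture with the global, all-shortest-paths quantifier in \cref{def:neighbor}. The delicate point is to rule out an orphaned node being claimed by a distant agent whose cell only grazes $\tilde a$'s cell, and to verify that ``$R_b$ borders $R_{\tilde a}$'' is genuinely equivalent to the formal neighbor condition rather than merely suggested by it. A secondary but necessary piece of bookkeeping is handling the random tie-breaking in \eqref{eq:optimal_agent} consistently across the two partition solves, so that nodes equidistant between two surviving agents do not appear to change hands for spurious reasons unrelated to the loss of $\tilde a$; I would assume either distinct distances or a fixed, deterministic tie-break rule to close this gap.
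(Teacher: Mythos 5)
Your decomposition and your key computation are essentially the paper's. Part~1 (stability of the $\arg\min$ under deletion of a non-minimizer) is verbatim the paper's opening step. Your star-convexity lemma is, line for line, the inequality chain the paper uses: for $t$ on a shortest path from $i$ to $o^c$, combining $(i \pathdist t) + (t \pathdist o^c) = (i \pathdist o^c) \leq (i \pathdist o^b) \leq (i \pathdist t) + (t \pathdist o^b)$ gives $(t \pathdist o^c) \leq (t \pathdist o^b)$; the only difference is packaging, since you state this positively as a property of cells while the paper runs it as a contradiction. The step where you stall --- turning the ``corridor'' into the neighbor condition of \cref{def:neighbor} --- is closed in the paper by contraposition: suppose the orphaned node $i$ is claimed by a non-neighbor $c$ of $\tilde a$; then (as the paper reads the negation of \cref{def:neighbor}) some node $j$ on a shortest path from $i$ to $o^c$ lies strictly in the interior of a third agent $b$'s cell, i.e.\ $(j \pathdist o^b) < (j \pathdist o^c)$, and your inequality applied to that $j$ yields $(j \pathdist o^c) \leq (j \pathdist o^b)$, a contradiction. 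Note that the strictness in ``interior'' is precisely how the paper disposes of your tie-breaking worry, so you do not need an extra genericity assumption. The delicacy you flag --- that \cref{def:neighbor} quantifies over \emph{all} pairs of nodes in the two cells and \emph{all} shortest paths, so that ``no offending node on this one path'' does not literally negate the definition --- is a genuine looseness, but it is equally present in the paper's own proof (which simply asserts that if no such $j$ exists the two agents are neighbors). Your proposal is therefore at the same level of rigor as the published argument; to finish it, replace the forward ``should exhibit a common boundary'' step with the contradiction above rather than attempting to verify the universally quantified definition directly.
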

\begin{proof}
    For any nodes $i$ with $a^*(i)\neq \tilde a$, the original optimal agent assignment remains optimal and therefore we only change the allocation for the nodes with $a^*(i) = \tilde a$. We proceed to prove by contradiction. Suppose for such node $i$, 
    \begin{equation} a^*(i) \in \arg\min_{a\in A-\{\tilde a\}} c^a_{io^a} =c,\label{eq:skip_neighbor}\end{equation} where $c$ is not a neighboring agent of $\tilde a$. This implies that the path between $i$ and $c$ goes through a node $j$, with $a^*(j) = b$ and $b$ is a neighboring agent  of $\tilde a$. For this node $j$, we also have that \begin{equation}(j \pathdist b) < (j \pathdist c) \label{ineq:dist}, \end{equation} and therefore it is in the interior of the partition associated with agent $b$. If such $j$ does not exist, then agents $a$ and $c$ would be neighbors by \cref{def:neighbor}. 

    By \eqref{eq:skip_neighbor}, we have 
    \[c_{ic}^c\leq c_{ib}^b.\] Furthermore, since all agents' speeds are the same, we have 
    \[(i \pathdist j)+(j \pathdist c) = (i \pathdist c)\leq (i \pathdist b).\]
Since the definition of $(i \pathdist b)$ calculates the shortest path distance between $i$ and $b$, we have 
\[(i \pathdist b)\leq (i \pathdist j) + (j \pathdist b).\] The previous two relations imply that
\[(j \pathdist c)\leq (j \pathdist b),\] which contradicts \eqref{ineq:dist} and completes the proof. 
\end{proof}

\subsubsection{The Case of Heterogeneous Speed}
\label{sec:mw-voronoi}
We note that \cref{thm:neighborAllocations} does not hold when the agents have different speeds. In fact, multiplicatively-weighted Voronoi partitioning based on different agent speeds results in unexpected nonconvex/non-continuous allocation results. Consider the following simple case, where many nodes are allocated to agents on the real line with three agents placed at locations $0,1,2,$ respectively, from left to right. We refer to these agents by their locations. The agent at $2$ has a speed of $2$ units/s and agents at $0, 1$ both move at a speed of $1$ unit/s. Therefore, any node $i$ which lies in $(-\infty, -2]$ has $a^*(i)=2$, any node $i$ in  $(-2, 1/2]$ has $a^*(i) = 0$, any node $i$ in $(1/2, 4/3]$ has $a^*(i) = 1$, and any node $i$ in $(4/3, \infty)$ has $a^*(i)=2$.  Note that due to the different speeds, even before any agent attrition, the partitions are not contiguous, as shown in \cref{fig:voronoi_counter}. Namely, agent $2$ covers the area nearby and also area far enough to the left of agent $0$, since its speed is the highest. If agent $0$ breaks down, then the point $-1$ would take time $2$s to reach for the agent at $1$ and $3/2$s for the agent at $2$, and therefore will be assigned to the agent at location $2$, who is not a neighbor of agent $0$, thus invalidating \cref{thm:neighborAllocations}.
This aspect of multiplicatively-weighted Voronoi partitioning is not addressed by earlier works such as \cite{kimMultiplicativelyWeightedVoronoiBased2020}. While our \cref{thm:neighborAllocations} is unable to handle this case, we wish to present it for further discussion.

\begin{figure}[ht]
    \centering
    \includegraphics[scale=0.3]{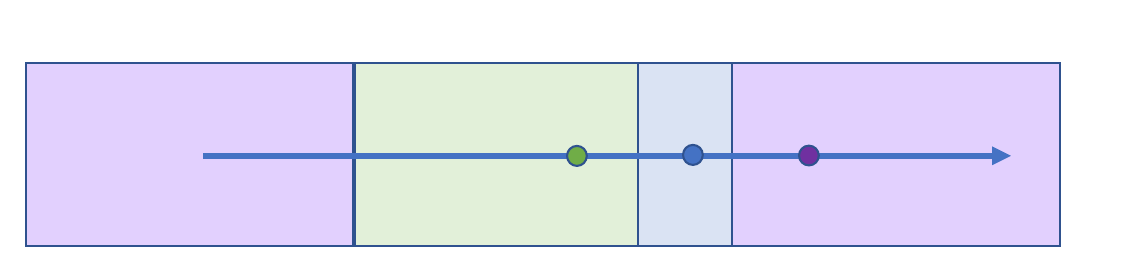}\caption{Discontinuous weighted Voronoi partition based on heterogeneous agent speed, illustrating the dilemma described in \cref{sec:mw-voronoi}. Dots represent agents and colored boxes represent partitions assigned to the agent of the same color. The right-most agent (purple) has significantly higher speed than the other two, resulting in unexpected partitioning results.}\label{fig:voronoi_counter}
\end{figure}


\subsection{AHPA Performance Bound}
\label{sec:bound-perf}
At its core, AHPA uses Voronoi partitioning to divide the environment and then determines the best tour of each partition using a TSP heuristic. Bompadre et al. derive an $\Omega(m)$ lower bound for the Voronoi partitioning with TSP method in the context of the multi-depot vehicle routing problem, which describes it as an $m$-approximation process, where $m$ is the number of depots \cite{bompadreProbabilisticAnalysisUnitDemand2007}. In our case, we view depots as agent origins (depots serving a single agent), and AHPA can be reduced to the same algorithm as \cite{bompadreProbabilisticAnalysisUnitDemand2007} by considering only the initial operations at lines~\ref{lst:line:initialStart}-\ref{lst:line:initialStop} of \cref{alg:ahpa}. Therefore, AHPA is an $m$-approximation of the optimization problem presented in \cref{sec:formulation_centralized}.

\subsection{Bound on Performance Loss after Attrition}
\label{sec:bound-attrition}
One of AHPA's main goals is robustness to agent attrition with little communication overhead. In this section, we provide a bound on the algorithm's maximum performance loss after agent attrition occurs.

Thanks to prior works, we know that the approximation ratio of AHPA to the optimal solution is $\Omega(m)$ \cite{bompadreProbabilisticAnalysisUnitDemand2007} and $O(m)$ \cite{hwangDistributedMultiDepotRouting2014}. Therefore, the tightest approximation ratio of AHPA is $\Theta(m)$. The bound on performance loss after agent attrition is the ratio of the original approximation ratio for $m$ agents and the new approximation ratio for $m-1$ agents after attrition:
$$\frac{\Theta(m)}{\Theta(m - 1)}$$

Hence, attrition of one agent roughly translates to $\Theta(\frac{m}{m-1})$ increase in average node idleness time.

\section{Experimentation \& Results}
\label{sec:methods}
We experimentally compare our approach to several existing state-of-the-art dynamic patrolling methods using both physical robots and a realistic simulation environment. Both physical and simulated experiments use the \textit{Grex} framework, which we developed for general MAS research.

\subsection{Experimental Environment}
Much prior work on patrolling algorithms, including that by \cite{portugalDistributedMultirobotPatrol2013}\cite{portugalCooperativeMultirobotPatrol2016}\cite{farinelliDistributedOnlineDynamic2017}\cite{wiandtSelforganizedGraphPartitioning2017}\cite{elgibreenDynamicTaskAllocation2019}, uses a patrolling simulator originally developed by David Portugal et al. in \cite{portugalDistributedMultirobotPatrol2013} and further papers. Many of the patrolling algorithms developed using this simulator are still considered to be the state-of-the-art. 

Separately, we have developed a multi-agent simulation and experimentation framework based on ROS~2, the Robot Operating System, which may be seen in \cref{fig:mas-framework}. Our framework, Grex, is capable of execution on real robots or may be used with a variety of compatible simulators. Using this multi-agent framework, we are able to write agent and experiment code which is common both to physical platforms and to a variety of simulators. ROS~2 provides us with features that are shared across these physical and simulated platforms including data visualization, communications introspection, runtime configuration, and a wide-ranging library of third-party modules for easy integration of new capabilities.

For the purposes of this paper, we leverage that extensibility to integrate the environments and patrolling algorithms created by Portugal et al. \cite{portugalDistributedMultirobotPatrol2013} and others with our framework\footnote{The integrated experiment code is publicly available at \url{https://github.com/NU-IDEAS-Lab/patrolling_sim}}. This greatly simplifies comparison with existing algorithms.

\begin{figure}[ht]
    \centering
    \includegraphics[width=0.43\textwidth]{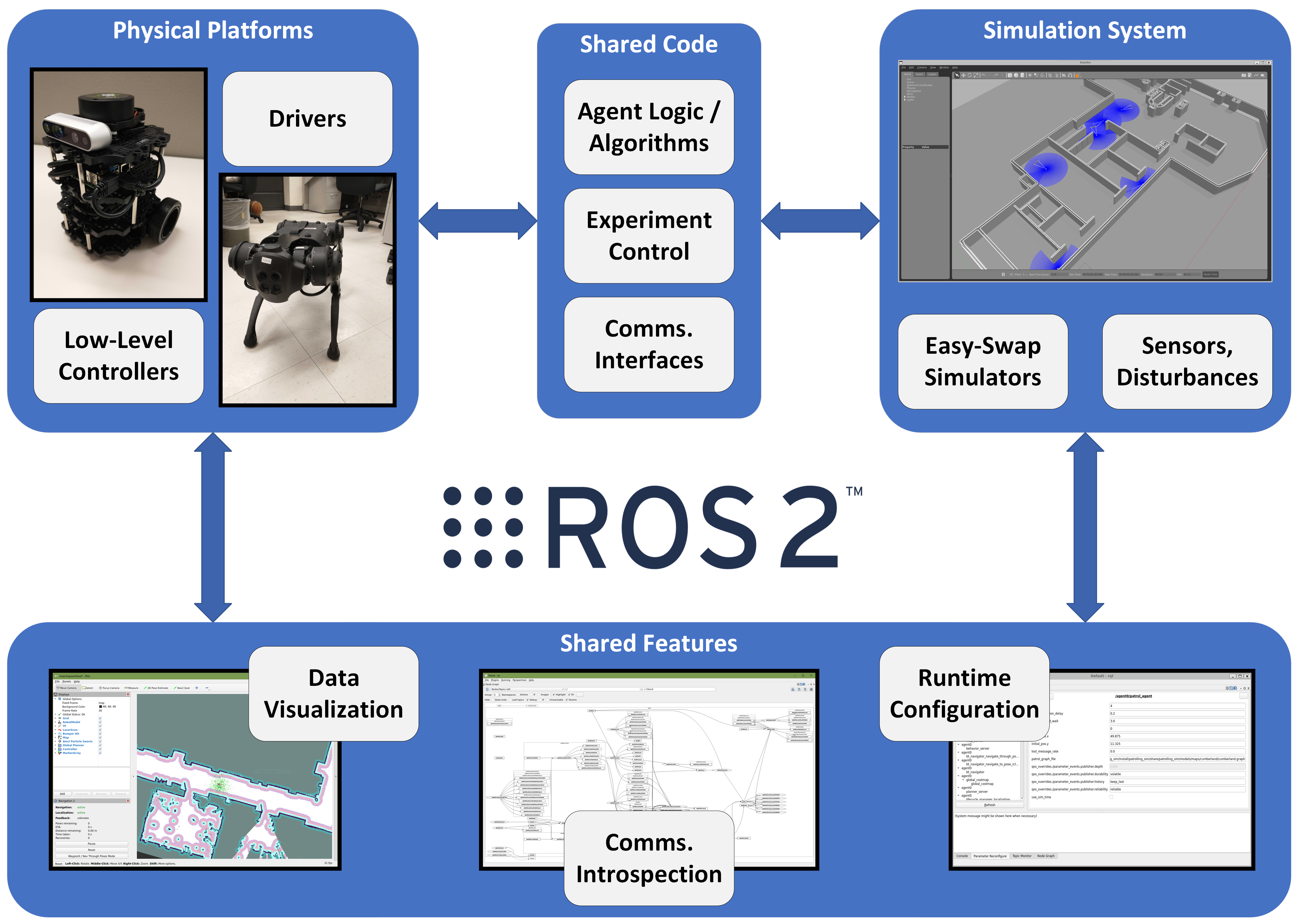}
    \caption{We use the MAS framework, Grex, for straightforward experimentation in both simulated environments and with physical robots. At top-left are the TurtleBot3 robots which we use in physical experiments. At top-right, simulated robots patrol the ``Cumberland'' environment from \cite{portugalDistributedMultirobotPatrol2013}.}
    \label{fig:mas-framework}
\end{figure}

We simulate sensor noise based on Gaussian distributions; this is the primary source of uncertainty in the simulation. Each agent performs its own localization and navigation. Agents may collide or interference with each other, resulting in delay reaching goals. For physical robots, real disturbances are at play, including sensor noise, localization uncertainty, communication losses, and robot collisions.

\subsection{Benchmark Algorithms}
\label{sec:benchmark_algorithms}

We select four benchmark algorithms to compare against our own solution.
Two are greedy algorithms that serve as a baseline: Greedy Bayesian Strategy (GBS) \cite{portugalDistributedMultirobotPatrol2013} and DTA-Greedy \cite{farinelliDistributedOnlineDynamic2017}.
We also compare against the more sophisticated Concurrent Bayesian Learning Strategy (CBLS) \cite{portugalCooperativeMultirobotPatrol2016} and DTA-Partitioning (DTAP) \cite{farinelliDistributedOnlineDynamic2017}.


\subsection{Experimental Procedure}
\label{sec:procedure}
For each of the algorithms described in \cref{sec:benchmark_algorithms} plus our own \cref{alg:ahpa} (AHPA), we perform tests using six agents in both the 40-node ``Cumberland'' environment from \cite{portugalDistributedMultirobotPatrol2013} and our own 18-node ``L440'' environment, an empty conference room. A ``Cumberland'' test lasts thirty minutes (1800 seconds). Each algorithm is tested twice: once with no agent attrition and once with attrition of single randomly-selected agents at the 300-second and 1300-second marks. A test in the ``L440'' environment lasts for five minutes.

For each algorithm and each test, we execute three runs over which results are averaged to account for possible  discrepancies. The test system is automated, and experimental monitoring and control is performed identically for each algorithm. All simulation tests were performed on the same machine.

\subsection{Results}
In analysis of the results, our primary focus is on the ``average idleness'' metric which we identify as our objective function in \eqref{eq:newform}. We look at this value over time to evaluate the performance of our algorithm.

Overall, our AHPA algorithm performs well in comparison with existing approaches and effectively addresses the problem statement in \cref{sec:formulation}. As seen in the left-hand column of \cref{fig:cumberland}, the AHPA algorithm outperforms all others in non-attrition trials. We attribute this to AHPA's partitioning of the environment and surmise via observation of experiments that many of the other algorithms suffer from either poor/overlapping partitioning (DTAP) or from physical collisions amongst the agents (DTAG, CBLS), which significantly impacts performance. AHPA's solution results in non-overlapping partitions which improves allocations while also decreasing physical interference and collisions amongst the agents. As expected, use of AHPA results in a lower standard deviation of node idleness than other approaches, since each agent is assigned a tour which they then patrol continually. This may be seen in the left-hand column of \cref{fig:cumberland}.

\begin{figure}[ht]
    \centering
    \includegraphics[width=0.45\textwidth]{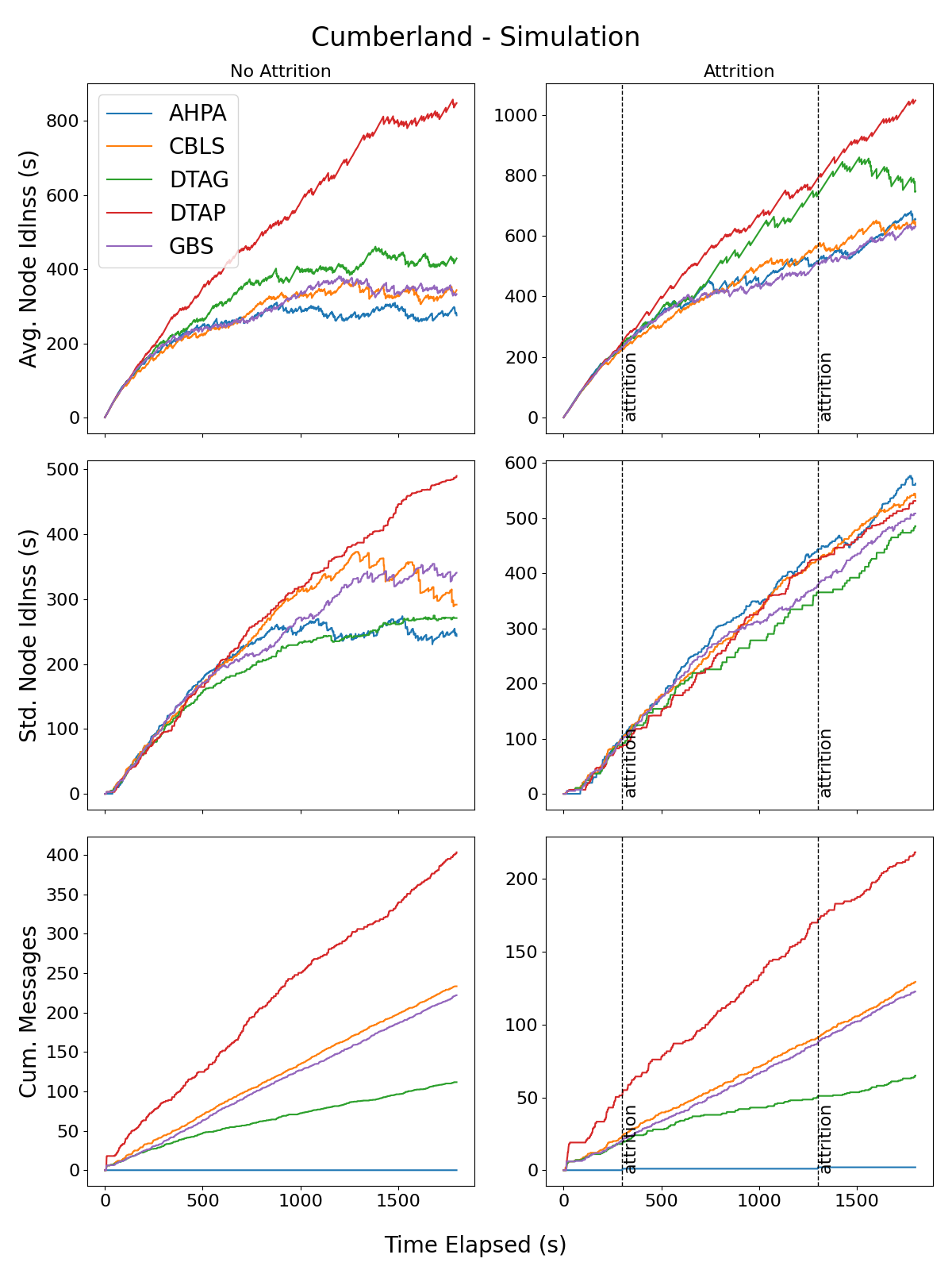}
    \caption{Performance of the algorithms over time in the simulated Cumberland environment originally used in \cite{portugalDistributedMultirobotPatrol2013}. At left, results with no agent attrition. Note the excellent comparative performance of AHPA (in blue). At right, with two instances of agent attrition. Note the cumulative message difference between AHPA and the other algorithms.}
    \label{fig:cumberland}
\end{figure}

In the attrition test, we remove a randomly-selected agent from the simulation at 300 seconds and another agent at 1300 seconds. The results of this are visible in the right-hand column of \cref{fig:cumberland}. While AHPA does not recover faster than other algorithms, it requires far fewer inter-agent messages than the other algorithms to do so. As seen in \cref{fig:cumberland}, it only requires a single message in case of agent attrition. Contrast this with the many messages used by other algorithms, which can be problematic in communication-constrained environments.

\begin{wrapfigure}{L}{0.215\textwidth}
    \centering
    \includegraphics[width=0.225\textwidth]{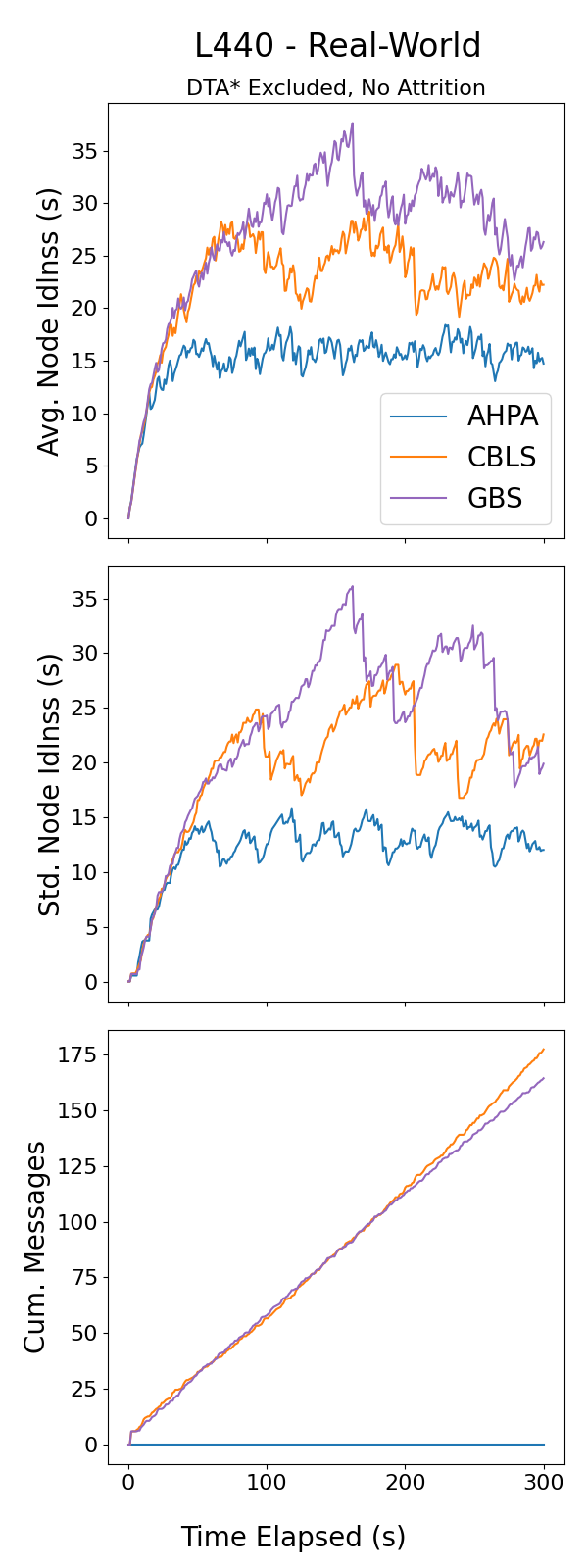}
    \caption{Performance of AHPA and selected benchmark algorithms in real experiments. Note the stability of AHPA throughout. The DTAP and DTAG algorithms are not shown, as their performance was so poor that the graph became difficult to read.}
    \label{fig:L440}
\end{wrapfigure}

Physical experiments demonstrate results consistent with those from simulation. AHPA exhibits stable operation throughout the test and outperforms the benchmark algorithms. However, we are surprised by the poor performance of the DTAP and DTAG algorithms. We cannot reproduce the high performance of those algorithms seen in \cite{farinelliDistributedOnlineDynamic2017}. In fact, DTAP and DTAG perform so poorly in our physical tests, with average idleness times nearing $252$ and $216$ seconds, respectively, that we exclude them from \cref{fig:L440} for readability. We observe that their poor performance is due to physical collisions and interference amongst agents during mission execution. However, other algorithms do not suffer from this problem to the same extent.

\paragraph*{Performance Bounds}
To demonstrate the AHPA performance bound described in \cref{sec:bound-perf}, we also compare AHPA with the optimal solution (computed using a commercial solver) for a toy problem with $n=6$ nodes and $m=2 \dots 6$ agents. The results may be seen in \cref{fig:bounds}, and demonstrate that useful performance bounds are provided.

\begin{wrapfigure}{R}{0.25\textwidth}
    \centering
    \includegraphics[width=0.25\textwidth]{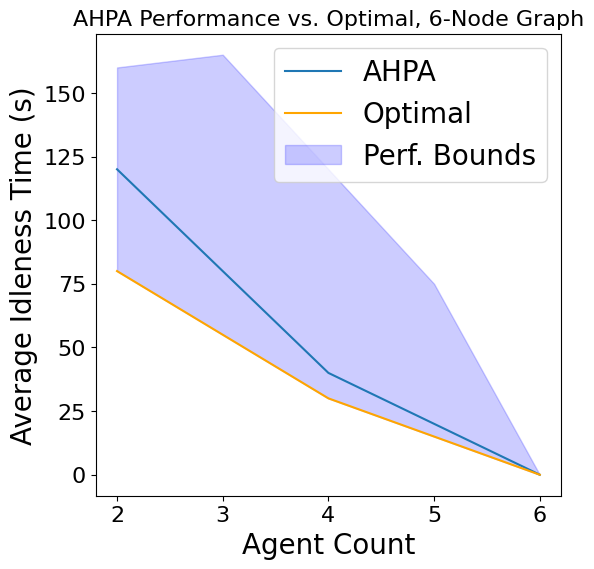}
    \caption{Performance of AHPA compared to the optimal performance. Note that AHPA remains within the performance bound described in \cref{sec:bound-perf}.}
    \label{fig:bounds}
\end{wrapfigure}

\section{Conclusion}
In this paper, we analyze the multi-agent patrolling problem and make a number of contributions to it. Importantly, we solve it with an adaptive heuristic algorithm (AHPA), using Voronoi partitioning to allocate nodes to agents and a TSP heuristic to determine visitation order for each agent. We provide performance guarantees for this algorithm, including for the case of agent attrition. This is the first patrolling algorithm that we know of to provide such guarantees.

Further, we devise new centralized and decentralized integer programming formulations for the patrolling problem which we feel are better-suited than existing MDVRP formulations to many patrolling scenarios.

We show that in the case of heterogeneous agent speeds, the behavior of Voronoi partitioning based on speed yields unexpected results. This has not been reported before to our knowledge, even though existing works use speed-weighted Voronoi partitioning for allocation.

We also show that in the case of homogeneous agent speeds, the attrition of a single agent will only affect its neighboring agents' allocations, which is beneficial in certain patrolling scenarios, especially when humans are involved.
By only changing the allocations of neighboring agents after attrition, we hope that future research will be able to devise a communication-free patrolling methodology relying only on local agent observations of neighbors to perform adaptation in the face of attrition.

Together, the above contributions form a highly effective multi-agent patrolling algorithm, AHPA, which is capable of adapting to agent attrition with minimal communication requirements and is subject to guaranteed performance bounds. As seen in \cite{tarantaWarfightingNeedsRobot2023}, this work is highly valuable in practice and we hope that it will lead to additional research focus on these practical considerations of attrition and disturbances in MAS.


\section*{ACKNOWLEDGMENTS}
Many thanks to Mr. Yixuan Wang, Mr. Henry Abrahamson, Dr. Xiangguo Liu, and Dr. Karen Smilowitz for their feedback and advice during the development of this paper.

\bibliographystyle{IEEEtran}
\bibliography{IEEEabrv,refs}

\end{document}